  \providecommand\BibTeX{{%
    \normalfont B\kern-0.5em{\scshape i\kern-0.25em b}\kern-0.8em\TeX}}}
\newtheorem{theorem}{Theorem}[section]
\newtheorem{lemma}[theorem]{Lemma}
\theoremstyle{definition}
\newtheorem{definition}{Definition}[section]
\begin{document}

%%
%% The "title" command has an optional parameter,
%% allowing the author to define a "short title" to be used in page headers.
\title{Re-looking at the  View Update Problem}

%%
%% The "author" command and its associated commands are used to define
%% the authors and their affiliations.
%% Of note is the shared affiliation of the first two authors, and the
%% "authornote" and "authornotemark" commands
%% used to denote shared contribution to the research.
\author{Terry Brennan \\ {terryjbrennan@gmail.com} }
%DB\email{terryjbrennan@gmail.com}
%DB\affiliation{%
%DB  \institution{True River Software}
%DB  \streetaddress{7506 N Rogers}
%DB  \city{Chicago}
%DB  \state{Illinois}
%DB  \country{USA}
%DB  \postcode{60626-1710}
%DB}

%%
%% The abstract is a short summary of the work to be presented in the
%% article.
\begin{abstract}
Relational databases have always had a means for creating a pseudo-table, called a view, defined by a query.  Views are like tables in most ways, except that they are read-only and can’t be updated.  The problem of how to update views has attracted a lot of attention in the 1980s but is unsolved.

The best approach from that time was by Bancilhon and Spyratos.  I use one of their overlooked theorems and find a number of simple  solutions for common relational operators.
\end{abstract}

%%
%% The code below is generated by the tool at http://dl.acm.org/ccs.cfm.
%% Please copy and paste the code instead of the example below.
%%
%%\begin{CCSXML}
%%<ccs2012>
%%   <concept>
%%       <concept_id>10002951.10002952.10003190.10003205</concept_id>
%%       <concept_desc>Information systems~Database views</concept_desc>
%%       <concept_significance>500</concept_significance>
%%       </concept>
%% </ccs2012>
%%\end{CCSXML}

%%\ccsdesc[500]{Information systems~Database views}

%%
%% Keywords. The author(s) should pick words that accurately describe
%% the work being presented. Separate the keywords with commas.
%%\keywords{view update problem, category theory, form automation}

%\thanks{Grants or other notes
%about the article that should go on the front page should be
%placed here. General acknowledgments should be placed at the end of the 
%article.}
%\subtitle{Do you have a subtitle?\\ If so, write it here}

% Authors must disclose all relationships or interests that 
% could have direct or potential influence or impart bias on 
% the work: 
%
%\section{Conflict of interest}
%This research was entirely self-funded and conducted only the author.  The author has no %conflicts of interest or competing interests.  This paper has mathematical ideas only, and %there is no software, data or material associated with it.
%%
%% This command processes the author and affiliation and title
%% information and builds the first part of the formatted document.
\maketitle

\section{The View Update Problem}
\label{intro}
Relational databases have always had operations and languages for combining and presenting data.  E.F. Codd defined a logical language similar to the first order predicate calculus, and showed how to convert any expression in that calculus to a set of operations on tables.  ANSI then defined a standard query language, SQL, that implemented and extended Codd’s relational algebra.  Part of the ANSI standards has been a view, an SQL query that becomes part of the database.  Views offer a different way of seeing the data.

From the time the idea of views arose, programmers have been trying to update databases through views.  The driving idea was data independence, that a programmer shouldn’t have to know the internal structure of the data.  Views offer a way of hiding the details and “smoothing” the data.  If programmers only updated database through view, then the internal structure of the database could change, but the programs wouldn’t need to be changed.  A simple example is a view that joins two data tables; the view hides the fact that there are two tables, presenting a single table to a programmer.  But data independence through views requires that views be fully updateable.

Almost immediately the hope for updating views ran into problem with their complexity.  Many approaches were tried but none were successful.  The idea of view updates slowly disappeared.

I propose reviving the view update problem for a practical reason – creating forms.  The SQL standard has been wildly successful for reporting.  A number of widely used tools are based on SQL, including presenting visual interfaces that create SQL statements.  Many non-programmers have been trained to create SQL. 

The SQL standard has not been as successful at automating and simplifying the creation of forms, where data is entered, changed or deleted, rather than reported.  Creating forms has required talented programmers working intensely, with a deep knowledge of the structure of the database and of its integrity constraints.  Some tools tried to automate form creation, but they weren’t generally successful because they lacked a programmer’s deep knowledge.  Updateable views would make form building easier, and possibly lead to automation of the process.

\section{Prior Work}
Dayal and Bernstein \cite{1} identified view updating as a problem.

A early result was by Bancilhon and Spyratos \cite{5}.  They started with a simple intuition – that a view update shouldn’t change information not in the view.  Views show only part of the information in a database, and the effect of view updates in the base database should be limited to the information exposed in the view.  Anything else is an unintended side effect.  They named this unchangeable information the view’s complement, and their rule was that the complement stayed constant;their approach became known as the constant complement approach.

Their development was entirely set-theoric.  They start with a set of states S, whose  internal structure was not specified.  In other words, the traditional structure of a database state -- as a tuple of relations over a set of domains -- wasn't assumed.  For them, an update is simply a function u from S to S, and a view simply a function f from S to another set V.

They then started building well-known set-theoretic structures.  A view, as a function between sets, induces an equivalence relation in S, denoted S/f, with all the states s in a class A $\in$ S/f map to the same view state.

Views can be ordered .  Let f,g be views, that is, f,g:S $\rightarrow$ V.  Let s, s' $\in$ S.  f$\geq$g iff f(s)=f(s') implies g(s)=g(s').  In each view's induced equivalence relation, f$\geq$g means f is a refinement of g, that each class in g's equivalence relation is the union of one or more classes in f's  equivalence relation.

This ordering creates a lattice; the largest view is called 1 and happens when each state in S in in their own equivalence class, meaning that it maps one-to-one to V.  The smallest view is called 0 and happens when all states in S map to the same constant state in V.  When f$\leq$g and g$\leq$f, they are called equivalent.  All of this is standard set theory.

As an example, consider a very small database DB with only one table, that has only one column, which can only take the value "a" and "b".   DB has four states: $\{$(a), (b)$\}$, $\{$(a)$\}$, $\{$(b)$\}$, $\{$$\}$.  For simplicity, I'll call the states ab, a, b and $\emptyset$.

Consider a view f that selects rows with the value a.  The view database will have the same structure as DB, but will have only two states: $\{$(a)$\}$,  $\{$$\}$.  The following table shows how f maps states of S to states of V:

\begin{center}
\begin{tabular}{ |c|c| } 
\hline
Base State & View State\\ 
\hline
\hline
ab & a\\
a & a\\
b & $\emptyset$ \\
$\emptyset$ & $\emptyset$\\
\hline
\end{tabular}
\end{center}

This view f induces an equivalence relation on DB with two classes [ab, a] and [b, $\emptyset$].  0 $\leq$ f because there is more than one equivalence class in DB/f, and   $\leq$ 1 because there are fewer than four equivalence classes in DB/f.

Bancilhon and Spyratos define a complement as another view c such that f $\times$ c is equivalent to 1, which means that the induced equivalence relation for f $\times$ c has exactly one state in each class.  Intuitively, f $\times$ c contains all the information in the original database DB.

In our example, let c select rows with the value b.  c is defined by this table: 
\begin{center}
\begin{tabular}{ |c|c| } 
\hline
Base State & View State\\ 
\hline
\hline
a,b & b\\
a & $\emptyset$\\
b & b \\
$\emptyset$ & $\emptyset$\\
\hline
\end{tabular}
\end{center}

Combining f and c gives this table:
\begin{center}
\begin{tabular}{ |c|c|c| } 
\hline
Base State & View State & Complement State\\ 
\hline
\hline
a,b & a & b\\
a & a & $\emptyset$\\
b & $\emptyset$ & b \\
$\emptyset$ & $\emptyset$ & $\emptyset$\\
\hline
\end{tabular}
\end{center}

The equivalence relation induced by f $\times$ c has four classes, and each state in DB is in a unique equivalence class.  In other words, f $\times$ c is equivalent to 1, to all the information in the database.  In this case, the base state equals the union of the view and complement states.  

Bancilhon and Spyratos then show that, for every complement of a view f, there is an update translation that is constant on the complement, not changing it.  Their result was very exciting, because it suggested that all that was needed was to define the right complement and update translations would pop right out.  Unfortunately it wasn't as simple as that.  First, there are lots of complements; 1 is a complement to every view, because it already contains all the information in the database.

A reasonable hope is that there is some kind of minimal complement.  Bancilhon and Spyratos defined a minimal complement as a view g such that g is a complement of f (f $\times$ c $\equiv$ 1) and for all other complements h of f, g $\leq$ h.  They then show that  minimal complements exist only in trivial cases, when f=0 or f=1.   They provide a construction; given a candidate for a minimal complement, they construct a another complement that is incomparable, showing that the candidate isn't minimal.  Applying their construction to our example, let h be defined by this table:

\begin{center}
\begin{tabular}{ |c|c| } 
\hline
Base State & View State\\ 
\hline
\hline
ab & b\\
a & ab \\
b & ab \\
$\emptyset$ & $\emptyset$\\
\hline
\end{tabular}
\end{center}

We can see that h is a complement by examining this table:
\begin{center}
\begin{tabular}{ |c|c|c| } 
\hline
Base State & View State & Complement State\\ 
\hline
\hline
a,b & a & b\\
a & a & ab\\
b & $\emptyset$ & ab \\
$\emptyset$ & $\emptyset$ & $\emptyset$\\
\hline
\end{tabular}
\end{center}

The induced equivalence classes of f $\times$ h are [ab], [a], [b], [$\emptyset$] which is equivalent to 1, so h is a complement of f.  However, g and h are incomparable (g$\nleq$h and h$\nleq$g).  The induced equivalence classes for g are [ab, b] and [a, $\emptyset$]; neither set of equivalence classes is a refinement of the other, so the views are incomparable.

Despite this theoretical limitation, other researchers were eager to apply their approach.  The other researchers reasoned that they needed to find intuitive complements.  But Bancilhon and Spyratos' approach was difficult to work with, blisteringly abstract, without the the usual relational database apparatus of domains, table structures, column names...  Even worse, it depended on updates that were completely different from SQL's INSERT, DELETE  and UPDATE operations.  Bancilhon and Spyratos require  "complete sets of updates," which were a set of updates (f:S$\rightarrow$S) that were closed under composition and had local inverse (give a state s $\in$ S and f:S$\rightarrow$S there had to exist a g:S$\rightarrow$S such that g(f(s))=s.)  It wasn't clear to researchers what a complete set of updates entailed.  And nothing looked like the Codd's relational operators.

The first sign of trouble was a note by Keller \cite{16} claiming that Bancilhon and Spyratos were wrong.  It turned out to be a confusion of terms and assumptions.  Hegner\cite{11}  described Keller's approach as the "open view" approach, that a user would see both the view and base databases, and Bancilhon and Spytratos's approach as the "closed view" approach, that a user would see only the view, and would not be able to see the base database.  Keller's approach was basically a programmer's approach, of writing a view update while knowing the structure of the base database; he eventually \cite{17}\cite{18} proposed a utility to help users write their own updates.  Date \cite{29} completed this approach by showing view update logic for several relational operators, including join.

The closed approach sees a view update facility as part of a DBMS, that has access the base tables, but hides the base tables from users.  The closed approach needs a systematic solution that applies to every possible view.  It demands math of the kind that Bancilhon and Spyratos used.

Problems multiplied.  Cosmadakis and Papadimitriou \cite{6} looked at a single table with a projection view.  They additionally required a view complement to have a join dependency with the view.  For example, let a table T have four columns A, B, C and D, and let T =  $\pi_{AB}$(T) $\times$ $\pi_{CD}$(T), with one tuple removed.  T doesn't satisfy a join dependency but $\pi_{CD}$(T) is a complement.

Bancilhon and Spyratos don't require that, given a view and complement, there is some computation that gives the base table, as a join dependence would require.  Instead, an implication of their approach is that a complement has to distinguish view states; if several base states map to the same view state, then the complement has to map each of those base states to different complement states.

Using their flawed definition of a view complement, Cosmadakis and Papadimitriou derive some very odd results.  They show that, in order to translate an insertion, the projected rows have to functionally determine the unprojected rows.

As a counterexample, consider an employee table keyed on the Social Security number (SSN).  A view projects all the columns except the SSN, to protect the privacy of the SSN.  To Cosmadakis and Papadimitriou, the employee data doesn't  functionally determine the SSN, so no new employees can be inserted in the view.  To me, the obvious solution is for a view insert to insert a row with the SSN set to a null value; I use this solution below.  Hegner \cite{25} immediately showed that null values solved the projection problem, but his result wasn't widely noted.

The paper that killed the constant complement approach was by Lagerak \cite{19}.  His key theorem is Lemma 3.2 which says that given a view f, an update translator T$_u$, and two different base states s$_1$ and s$_2$ with f(s$_1$)=f(s$_2$) and T$_u$(s$_1$)=T$_u$(s$_2$), then T$_u$ isn't a valid translator.  His lemma is correct but his interpretation is wrong, because the fact that T$_u$(s$_1$) won't equal T$_u$(s$_2$) characterizes constant complement translators.  Intuitively, s$_1$ and s$_2$ are different because their complements are different; translators preserve those different complement so the updated base states T$_u$(s$_1$) and T$_u$(s$_2$) will also be different.  Lagerak's condition that T$_u$(s$_1$)=T$_u$(s$_2$) will simply never occur with constant complement translations.  His lemma is true but vacuous.

Researchers at that time cared mainly about projections and joins, because they were using dependencies to structure tables.  An approach that didn't work with projections seemed hopeless.  Many researchers gave up and little progress was made  \cite{22} \cite{23} \cite{24} \cite{13}.

There were two notable exceptions.  Hegner, noticing that all of the relational operators except set difference and division were monotonic, developed an order-based approach \cite{26} with a lattice of views, in which a view and complement combine in a join.  The approach can handle only a series of insertions or deletions but not both, so it can't handle a simple update operation.

Johnson and Rosebrugh noticed the flavor of category theory in \cite{5}  and recast the theory in their Entity-Attribute sketches, which are made out of categories.\cite{28}\cite{29} \cite{30} Unfortunately, they weren't able to make any progress on the problem and turned to update translations in lenses \cite{31}.

Prior researchers focused on the theorem in \cite{5} that, given a complement, a unique complement-preserving update translation exists.  Another theorem in \cite{5} shows that, given a update translation, there was a unique complement that it preserved.  I propose using both theorems as the up and down strokes of a saw.

\section{From Update Translations to Complements}

Let's start with a few definitions from \cite{5}.

\begin{definition}[Update]
Let S be a set of states.  An update is a mapping u:S$\to$S.
\end{definition}

\begin{definition}[View]
Let S, V  be sets of states.  An view f is an onto mapping f:S$\to$V.
\end{definition}

\begin{definition}[Translation]
Let S, V  be sets of states, and f:S$\to$V be a view.  Let v:V$\to$V be a view update and T(u)S:$\to$S be a base update.  Then T(u) is a translation iff
\begin{enumerate}
\item fT(u)=uf
\item $\forall s \in S$, uf(s)=f(s) $\rightarrow T_u(s)=s$
\end{enumerate}
\end{definition}

The first condition is usually shown as a commutative diagram, in which every path is equal.
\begin{center}
\begin{tikzcd}
A \arrow [r, "T(u)"] \arrow [d, "f"] & B \arrow [d, "f"] 
\\ V(A) \arrow [r, "u"] & V(B)
\end{tikzcd}
\end{center}

\begin{definition}[Complete Set of updates]
A set of updates U is complete when
\begin{enumerate}
\item The composition of two updates in U is also in U. ($\forall$f, g $\in$ U, gf $\in$ U)
\item All updates have local inverses ($\forall$f$\in$U, $\forall$s$\in$ S, $\exists$ g$\in$U such that gf(s)=s.)
\end{enumerate}
\end{definition}
If identities are included in the set of updates, then a complete set of updates makes a set of states into a category .

\begin{definition}[Translator]
Let S,V be sets of states.  Let f:S$\to$V be a view.  Let $U_s$ be a set of updates on S and U be a complete set of updates on V.  A mapping T: U$\to U_s$ is a translator iff
\begin{enumerate}
\item $\forall u\in U$, T(u) is a translation
\item $\forall u,v\in U$, T(uv)=T(u)T(v)
\end{enumerate}
\end{definition}
This makes a translator a functor in category theory.

\begin{definition}[Equivalence]
Let U be a complete set of view updates.  Let T be a translator.  For s, s'$\in$S, s$\equiv$s' iff $\exists u \in$ U such that s=T(u)(s')
\end{definition}
In other words, s$\equiv$s' if a translated update maps base state s' to s.  \cite{5} shows that $\equiv$ is an equivalence relation.  This  means it generates a partition of S, denoted S/$\equiv$. \cite{5} then shows that c:S$\to S/\equiv$ is the complement of the translation T.

This is very elegant, but the states of the complement are sets of base states rather than base states themselves.  It's not clear at all how this complement makes sense in a relational database, which may be why other researchers didn't pay attention to this result.

I will show that, for many relational operators, the equivalence classes of the complement are in one-to-one correspondence with an easily-computed set of view states, showing how the complement makes sense relationally.

\section{Schemas}
Before showing these things, I first need to introduce my formal definition of database.  For simplicity and without loss of generality, I won't use column names but only column numbers.  To avoid an overwhelming number of numeric subscripts and superscripts, I will use C-like notation, 

\label{sec:Schemas}

\begin{definition}[Domain]
A domain is a set, with one element defined as the “null” value.
\end{definition}
Each domain is denoted with an italicized label, such as \textit{String}. In Oracle, the null value is a missing value, meaning that value isn’t known; calculations with null values result in null values.   In other DBMS, each domain may have a unique null value, such as 0 for number and a zero-length string for strings.
\\Most database systems provide a small number of domains:
\begin{enumerate}
\item Numbers
\item Character strings
\item Dates and date/times
\item Logical (true/false) values
\item Undifferentiated data (blobs)
\end{enumerate}

\begin{definition}[Table Schema]
A table definition is an array of Domains.
\end{definition}

\begin{definition}[Database Schema]
A database schema is an array of table schemas.  Note that each element of the array is itself an array.  The lengths of the table definition array can be different.
\end{definition}
An example is a database with two tables; the first table has two columns, both strings; the second table has three columns, a string and two numbers.  Its schema is:
\\Schema[1][1]=\textit{String};
\\Schema[1][2]=\textit{String};
\\Schema[2][1]=\textit{String};
\\Schema[2][2]=\textit{Number};
\\Schema[2][4]=\textit{Number};

\begin{definition}[Table State]
A table state is a subset of Domain[1] $\times$ Domain[2] $\times$ \textellipsis $\times$ Domain[n], where n is the table width.
\end{definition}
A table state is a relation, a set of n-tuples, where n is the length of the table.  Each element of each tuple belongs to the domain specified in the schema.   For example, a valid state for the first table is a set of 2-tuples, where the both elements of the tuples are string, such as:
\\ \{(‘A’, ‘AAA’), (‘B’, ‘BBB’)\}.

\begin{definition}[Database State]
A database state is an array of table states, where each table state is valid for the schema.  I will use a C-like notation of T[1] and T[2].
\end{definition}
For example, a possible state for this schema is 
\\T[1]= \{(‘A’, ‘AAA’), (‘B’, ‘BBB’)\} 
\\T[2]= \{(‘A’, 1, 11), (‘A’, 2, 22), (‘A’, 3, 33), (‘B’, 4, 44)\}.

\begin{definition}[Database Update]
Given two states A and B of a schema, the database update between them is two arrays of sets of tuples, Add and Del, consistent with the schema of the states.  Add[i] contains the tuples to add to table i and Del[i] contains tuples to delete from table i.  This means Add[i] $\cap$ T[i] = $\emptyset$ and Del[i] $\subseteq$ T[i].
\end{definition}

A SQL UPDATE operation deletes one row in a table and adds a new, altered row into the same table, so appears as rows in both the Add and Del sets.

Using the example, an update u from state a to state ab adds the row b.  So u.Add[1]={b} and u.Del[1]=$\emptyset$.

\begin{lemma}[Updates can be calculated]
Given states A and B, the database update u:A → B can be calculated.  For each table i:
$Add[i] = A.T[i] \setminus B.T[i]$.  (The rows in B that aren’t in A)
$Del[i] = B.T[i] \setminus A.T[i]$.  (The rows in A that aren’t in B)
\end{lemma}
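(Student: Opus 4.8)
The plan is to treat an update as an operation that can be \emph{applied} to a state, and then to verify that the computed $Add$ and $Del$ arrays, when applied to $A$, reproduce $B$ exactly. First I would make explicit the semantics of applying an update $u=(Add,Del)$ to a state $A$: for each table $i$ the resulting table state is $(A.T[i]\setminus Del[i])\cup Add[i]$, which matches the SQL picture of first removing the deleted rows and then inserting the added ones. With this semantics fixed, the lemma amounts to exhibiting the unique pair of arrays, consistent with the earlier Database Update definition, that transforms $A$ into $B$.

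Before computing I would reconcile the formulas with their parenthetical readings. The ``rows in $B$ that aren't in $A$'' are exactly the rows that must be inserted, so the intended assignment is $Add[i]=B.T[i]\setminus A.T[i]$; symmetrically the ``rows in $A$ that aren't in $B$'' are the rows that must be removed, so $Del[i]=A.T[i]\setminus B.T[i]$. I would then check the two consistency requirements from the Database Update definition. Since $Add[i]=B.T[i]\setminus A.T[i]$ is disjoint from $A.T[i]$ by construction, the condition $Add[i]\cap A.T[i]=\emptyset$ holds; and since $Del[i]=A.T[i]\setminus B.T[i]\subseteq A.T[i]$, the condition $Del[i]\subseteq A.T[i]$ holds.

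The heart of the proof is the verification that applying this update to $A$ reproduces $B$, a short set-theoretic identity carried out table by table. For each $i$ I would compute
\[
(A.T[i]\setminus Del[i])\cup Add[i]=\bigl(A.T[i]\cap B.T[i]\bigr)\cup\bigl(B.T[i]\setminus A.T[i]\bigr)=B.T[i],
\]
where the first equality uses $A.T[i]\setminus(A.T[i]\setminus B.T[i])=A.T[i]\cap B.T[i]$. To finish I would argue uniqueness: the constraints $Add[i]\cap A.T[i]=\emptyset$ and $Del[i]\subseteq A.T[i]$ force the two pieces $Add[i]$ and $A.T[i]\setminus Del[i]$ to be disjoint, so demanding that their union equal $B.T[i]$ pins down $Add[i]=B.T[i]\setminus A.T[i]$ and $Del[i]=A.T[i]\setminus B.T[i]$ with no freedom remaining.

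Since every step is elementary set algebra, there is no real obstacle in the difficulty sense; the only genuine pitfall is \emph{orientation}. The naming $Add$/$Del$ and the direction of the arrow $u:A\to B$ must be kept consistent, and as displayed the two set differences in the statement appear swapped relative to their parenthetical descriptions. The main care in writing the proof is therefore to nail down once and for all that ``add'' means ``present in the target $B$ but not the source $A$,'' after which the computation and the uniqueness argument are forced.
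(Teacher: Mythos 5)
Your proposal is correct, but there is nothing in the paper to compare it against: the paper states this lemma bare, with no proof at all, and then simply remarks that ``updates are unique, because the calculation has only one result.'' That remark actually establishes only that the \emph{calculated} update is well-defined, not that it is the \emph{only} update from $A$ to $B$; your uniqueness argument --- that the constraints $Add[i]\cap A.T[i]=\emptyset$ and $Del[i]\subseteq A.T[i]$ force $Add[i]=B.T[i]\setminus A.T[i]$ and $Del[i]=A.T[i]\setminus B.T[i]$ once one demands $(A.T[i]\setminus Del[i])\cup Add[i]=B.T[i]$ --- is exactly what is needed to justify the paper's claim, and the paper never supplies it. Your proof also makes explicit a prerequisite the paper leaves implicit throughout: the semantics of \emph{applying} an update to a state, without which the lemma cannot even be stated as a verifiable assertion. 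Finally, you are right that the displayed formulas are swapped relative to their parenthetical glosses: as printed, $Add[i]=A.T[i]\setminus B.T[i]$ contradicts ``the rows in $B$ that aren't in $A$,'' and the parentheticals (together with the worked example $u.Add[1]=\{b\}$ for the update from state $a$ to state $ab$, and the later inverse formulas) confirm that your corrected orientation is the intended one. In short, your proposal does not merely match the paper's proof --- it supplies the proof, and repairs the statement, that the paper omits.
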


This implies that updates are unique, because the calculation has only one result.  Between any two states, there is is exactly one update.

Updates can be composed.  If u:A$\to$B and v:B$\to$C, then vu:A$\to$C must be the unique update from A to C.  This leads immediately to:

\begin{lemma}[Database updates are a complete set of updates]
\end{lemma}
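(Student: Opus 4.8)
The plan is to verify the two defining properties of a complete set of updates (Definition, Complete Set of updates) directly for the set of database updates, using the uniqueness established in the preceding lemma.

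First I would establish \textbf{closure under composition}. Let $u:A\to B$ and $v:B\to C$ be database updates. Their composite $vu$ is a map from state $A$ to state $C$. By the preceding lemma, the unique database update from $A$ to $C$ is given by $Add[i]=A.T[i]\setminus C.T[i]$ and $Del[i]=C.T[i]\setminus A.T[i]$ for each table $i$. I would argue that this map is itself a valid database update: its $Add$ and $Del$ arrays are sets of tuples consistent with the common schema, and they satisfy the constraints $Add[i]\cap A.T[i]=\emptyset$ and $Del[i]\subseteq A.T[i]$ required of any database update. Since the lemma guarantees every pair of states is connected by exactly one update, the composite $vu$ simply \emph{is} that unique update from $A$ to $C$, so it lies in the set of database updates. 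This gives closure.

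Next I would establish the \textbf{local inverse} property. Given a database update $f:S\to S$ and a state $s\in S$, set $t=f(s)$. I would take the local inverse to be the unique database update $g:t\to s$ that the preceding lemma produces (swapping the roles of source and target, so $g.Add[i]=t.T[i]\setminus s.T[i]$ is replaced appropriately to run from $t$ back to $s$). By construction $g(t)=s$, hence $g(f(s))=g(t)=s$, which is exactly the required condition $gf(s)=s$. The point is that the lemma guarantees such a $g$ always exists because \emph{between any two states there is exactly one update}, in particular between $t$ and $s$.

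The main subtlety, rather than a deep obstacle, is bookkeeping about what ``the'' update between two states means as a function on $S$ versus as an $(Add,Del)$ pair. An update is specified by its $Add$ and $Del$ arrays relative to a \emph{starting} state, so strictly it acts only on the state for which those arrays are consistent; I would be careful to phrase composition and inversion so that each update is applied to the state on which it is defined, and to invoke the uniqueness remark (``between any two states there is exactly one update'') to identify the set-theoretic composite with a genuine database update rather than merely a function $S\to S$. Once that identification is made explicit, both properties follow immediately from the preceding lemma, and adjoining the identity updates (each with empty $Add$ and $Del$) makes $S$ into a category as the remark following Definition (Translator) anticipates.
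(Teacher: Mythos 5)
Your proposal is correct and follows essentially the same route as the paper: closure comes from the fact that between any two states there is exactly one (computable) update, so the composite $vu:A\to C$ is that unique update, and the local inverse is the update running back from $B$ to $A$, obtained by swapping the $Add$ and $Del$ arrays exactly as the paper's proof states ($v.Add[i]=u.Del[i]$, $v.Del[i]=u.Add[i]$). Your extra bookkeeping about an update being defined relative to its starting state is a welcome clarification the paper glosses over, but it does not change the argument; note only that you inherited the paper's own typo in the calculation lemma, where the $Add$ and $Del$ formulas are swapped relative to their parenthetical descriptions (rows to add should be $C.T[i]\setminus A.T[i]$).
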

\begin{proof}
\begin{enumerate}
\item For updates u:A$\to$B and v:B$\to$C, $\exists vu:A\to$C in the set of database updates. 
\item  For u:A$\to$B, $\exists v:B\to A$ in the set of database updates.  vu(s)=s.
For all tables i, v.Add[i]=u.Del[i] and v.Del[i]=u.Add[i].
\end{enumerate}
\end{proof}

Compositions are computed as:
\begin{enumerate}
\item vu.Add[i] = (v.Add[i] $\cup$ u.Add[i]) $\setminus$ v.Del[i]
\item vu.Del[i] = (v.Del[i] $\cup$ u.Del[i]) $\setminus$ v.Add[i]
\end{enumerate}

Inverses are computed as:
\begin{enumerate}
\item v.Add[i] = u.Del[i]
\item v.Del[i] = u.Add[i]
\end{enumerate}

Every state has an identity. id:s$\to$s  id.Add[i]=$\emptyset$;id.Del[i]=$\emptyset$ 

Our example above can be diagrammed as:

\begin{center}
% https://tikzcd.yichuanshen.de/#N4Igdg9gJgpgziAXAbVABwnAlgFyxMJZARgBoAGAXVJADcBDAGwFcYkR7SACAIxAF9S6TLnyEU5UsWp0mrdvQFCQGbHgJEATFJkMWbRCD6DhasUTKbdcgyHICZMKAHN4RUADMAThAC2SMhAcCCRJEEYsMFs4CAioEBoACxh6eMQwZkZGGhx6LEZ2SCiEoxgwNIBaAGYwvXlDAGo+GkZ6HhhGAAURdXEQSOxYJU8ff0Qw4ICadvKkatqbdgrmoLyCw196NDhJpJS0kET6LwwNFraO7rMNQwGsIZMQbz8kbSCQ8emytJrz9q6euZbmBBmwaHVbA1FI9nmMJh83hCloovrNEDUYaMpu8kFVUZVfrJ9MiSq1-ldRDd+iD7mxMS90TkPoEZgSFsTGijwhcAdc+ncHspYbima98XNCUjGisyZdAVSBXShVjEG9JozSmj5uDFoZlqSeRTeuxFcMnir4aEWhAIGgLAAOSQeJhwGAyWW8yl9LxYZyJHAlKX9KBcM3C1WiiPhG12lAATidLrdfzlfPYPr9AZ1HODYZVgXVgUYMa0AHZE4xXe7DfLvb7-YHdbn6WM8TiNcXbUQK1WU57jYYMw3s-Vm5R+EA
\begin{tikzcd}
                                                                                                                                                                     & {a, b} \arrow[ld, "-b" description, harpoon, bend right] \arrow[rd, "-a", bend left] \arrow["id "', loop, distance=2em, in=125, out=55] \arrow[dd, "-ab" description, bend left] &                                                                                                                                                         \\
a \arrow[ru, "+b" description] \arrow[rd, "-a" description, bend right] \arrow["id"', loop, distance=2em, in=215, out=145] \arrow[rr, "-a+b" description, bend left] &                                                                                                                                                                                  & b \arrow[lu, "+a" description] \arrow[ld, "-b" description] \arrow["id"', loop, distance=2em, in=35, out=325] \arrow[ll, "+a-b" description, bend left] \\
                                                                                                                                                                     & 0 \arrow[lu, "+a" description] \arrow[ru, "+b" description, bend right] \arrow["id"', loop, distance=2em, in=305, out=235] \arrow[uu, "+ab" description, bend left]              &                                                                                                                                                        

\end{tikzcd}
\end{center}

\section{Selection}
The simplest example of this construction is the Selection relational operator, which selects the rows of a table that satisfy a logical condition $\alpha$.  The rows that satisfy $\alpha$ are in the view.

It's immediately obvious that its complement is another selection operator, that selects rows that don't satisfy $\alpha$.  So the a complement view c is a selection whose logical condition is $\lnot \alpha$.  The selection view v and its complement c together contains all the rows in the base table.  (For state s, v(s) $\cup$ c(s) = s)

It's somewhat less obvious that all view updates contain only rows that satisfy $\alpha$.  Look at additions and deletions separately.  A row not satisfying $\alpha$ can't be added, because then the updated view state would contain rows that don't satisfy $\alpha$, so wouldn't be in the Selection view.  All deleted rows satisfy $\alpha$ because they were in the view before deletion.

The same is true of the complement: in an update between complement states, every added or deleted row satisfies $\lnot \alpha$.

This immediately shows how to translate a selection view.  A view update has added and  deleted rows which satisfy $\alpha$, which can also be added and deleted in the base scheme, so the view update and base update are identical.  This translation preserves the complement, because it doesn't affect any rows that aren't $\alpha$-compliant, which are the rows in the complement.  So the update translation is identity: the view update is also the base update.

This implies that a DBMS has to check that added view rows satisfy $\alpha$.  If a new view row didn't satisfy $\alpha$, it wouldn't be a view update because it wouldn't map to any other view states.  

Next I prove that this is the complement promised by \cite{5}.  Translated view updates only add and delete  $\alpha$-compliant rows; the non-compliant rows aren't touched.  For base state s, every other state s' in the equivalence class (s$\equiv$s') has a translated view update T(v) such that T(v)(s')=s.  But T(v) only has rows that satisfy  $\alpha$, and will never add or delete non-compliant rows.  So every state in the  equivalence class S/$\equiv$ will have the same non-compliant rows, because no translated updates will affect them.  The non-compliant rows are exactly the complement c(s).  So there is a one-to-one correspondence between equivalence classes and states of the complement.  In other words, the method in \cite{5} computes exactly what we want, once we find the common element of the states of the equivalence class.

These observations immediately generalize to the case of multiple tables.  Each table i has a predicate $\alpha_i$, and has a complement of $\lnot\alpha_i$ rows.  The complement state is the array of complement tables.

A very nice property of Selection is that the view and complement have no overlap; they split the information in a database in half.  This leads to a different definition of an ideal complement:

\begin{definition}[Perfect Decomposition]
In a perfect decomposition, the intersection of the view and its complement is empty.
\end{definition}

A good heuristic for guessing the complement of an update translation is to determine the base database state after a view update that deletes all view rows.  For example, consider a view state v, and a view update u deletes all the rows in v.  T(u) will delete all the $\alpha$-compliant rows in the base table, because those are the rows in the view, leaving all the  $\lnot \alpha$-compliant rows -- which is the complement.

\section{Union}
The inverse of the Selection relational operator -- union -- is also very simple.

It's immediately obvious that a union has an empty complement.  Every row that is in the two unioned tables are in the view, so there are no rows that are excluded.

This implies that union is a perfect decomposition, because the intersection of any set and the empty set is the empty set.

Unfortunately, this doesn't help us select a transaction for a union operation.  Deletions are obvious -- a deletion in the union has to translate into deletions in both base tables; otherwise the undeleted row would still appear in the Union view.

The problem is adding a row to the Union, because many reasonable translations exists.  A row added to the Unions view can be added to the left table, or to the right table, or to both tables, or randomly to the left or right tables.  Still other translations are possible.  \cite{5} provides no guidance because all translations have the same empty complement.

\section{Projection}

Let's look next at the issues that the contemporaneous researchers saw, which were with projection.  Let T be a table with three columns – K, A and B – with K functionally determining A and B.  Let V be the projection of T onto columns K and A, denoted $\pi_{KA}$.  The obvious complement is a projection onto K and B, denoted $\pi_{KB}$, with $\pi_{KA} \otimes \pi_{KB} = T.$  As  \cite{7} noted, if a row is deleted from $\pi_{KA}$, the intuitive translation is to delete the row with the same key from T; the reasoning is that if the row’s key is deleted, then all dependant attributes should be deleted too.  But deleting a row from base table T would also delete the row from the complement $\pi_{KB}$, so the intuitive translation creates side effects and isn’t valid under the constant complement approach. Adding a row to $\pi_{KA}$ has the same problem.  Modifying non-key data doesn’t, as \cite{23} points out.

My solution is the SQL INSERT statement as a model; if a column isn't specified in the INSERT, then it get a null value.  So when adding a row to $\pi_{KA}$, I specify that the translated update sets column B of the base row to null.  To insure that the complement $\pi_{KB}$ isn't modified, I change the definition of the complement to not only project rows K and B, but also to select rows with B$\neq$null.  Since a complement is a view, making this additional specification is perfectly legitimate.

I use a similar trick in translating a deletion from the $\pi_{KA}$ view.  I don't delete the base row, but set the A column is set to null.  To make this row disappear from the view, I change to view to also exclude rows with null values in A.

The last step is to consider what happens when if the row is deleted from both $\pi_{KA}$ and $\pi_{KB}$; the base row should be deleted because it holds no useful information.

Given these revised view definitions and update translations, we can easily compute the induced complement.  Pick a base state s.  Its equivalence class members are all the states that can be reached by translated updates.  The translated updates always change columns K and A, giving them every possible value.  The only part of the state s that doesn’t change are its (K, B) values, because the translated updates don’t change them.  So the set of equivalence classes is in one-to-one correspondence with $\pi_{KB}$.

\section{Joins}
Joins are much more complicated to update than projections.  One reason is that joins are used for computations in databases, not just for representing information.  Consider this pathological example,

A database has two tables.  The Item table has five columns: ItemID, ItemDesc, Height, Length, Depth.  The Box table has five columns: BoxID, Height, Length, Depth, Cost.  The following SQL statement finds all the boxes into which an item can fit and orders them by box cost:
\begin{lstlisting}[language=SQL]
SELECT BoxID 
FROM Box JOIN Item ON 
    Item.Height<=Box.Height 
and Item.Length<=Box.Length 
and Item.Depth<=Box.Depth
WHERE Item.ItemID='XYZ'
ORDER BY Item.ItemID, Box.Cost
\end{lstlisting}

This SQL statement clearly represents a relationship between BOX and ITEM, that we could call a "Best Fit" relationship. What would it mean to add a row to this join?  It's hard to say.  Should you you add a box or an item, and what should its dimensions be?

It clearly it makes no sense to translate view updates for Joins that only compute a result.  Only the "structural" joins should be considered.  Most database design methods break tables into smaller tables, then  use joins to bring separated information back together; these are the structural joins.  I'll look at two kinds of structural joins, while making no claim that they are the only structural joins.

\subsection{Hierarchical Joins}
A hierarchical join Parent $\otimes$ Child is a one-to-many relationship, where every Child row must join with at least one Parent rows.  The Parent typically has a unique key, which the Child has as well; they are equijoined on that key.  In addition, the parent’s key is unique.  The child’s key can either be non-unique or can be part of larger unique key that includes the parent's unique key.  An integrity constraint is that every child row must have a parent row, typically implemented with a SQL Foreign Key constraint.

A good example of a hierarchical join is an invoice, where each row in Parent represents an invoice and each row in Child represents a line on the invoice.  The key is the invoice number.  Every line row has to have a invoice row.  

The complement of a hierarchical join is easy to see.  The only information that doesn't appear in a equijoin are the parent rows for which there are no children.  Given the join view and its complement, it's possible to recreate the base parent table by projecting the parent columns and unioning it with complement, and the recreate the base child table by projecting on the child columns.  This is a perfect decomposition.

Translating view updates is also straightforward.  I'll look at deletion first.  A row in the view represents one parent and one child row, so deleting it should be handled by deleting the child row; the parent may be the parent of another child too, and deleting could violate the integrity constraint.  But if all of a parent's children are in the view deletion set, then the parent row should be deleted too.

Adding view rows is straightforward.  A new view row should be broken down into a parent part and a child part.  If the parent part already exists in the base parent table, then add the child row to the base child table; if the parent row doesn't exist base parent table, then add rows to parent and child base tables.

The induced complement of this view update strategy is, as expected, the childless parent rows of s.  To see this, let s be a base state.  Consider a view update that deletes all rows in the view.  This update strategy will delete all the base children rows and all the parent rows that have children; the only base rows remaining will be the childless parent row.  So translated view updates can't delete any childless parent rows.  At the same time, no matter what view rows are added to the state, no childless parent rows will be added to the base.  So the common element shared by all members of [s] is the set of childless parent rows.  The set of equivalence classes is in one-to-one correspondence with the states of the complement.

\subsection{Foreign Key Joins}
Another kind of structural join is a foreign key join, where a column in a table is a key to another (foreign) table.  For example, an invoice line may contain a part number, which is a key to the Part table.  

A foreign key join is also a one-to-many joins but its different intention leads to a different update strategy, which leads to a different complement.  Updating the Join view should never modify the foreign table.  For example, deleting an invoice line should never affect the Part table, which represents a separate entity.

More precisely, a view deletion should delete the local row, but never the foreign row.  A view addition should add a new local row, but never add new foreign rows.  Essentially, the foreign table is off-limits to the view and should never be altered.

The complement is clearly the foreign table, because view updates never alter it.  To see this, let s be a base state.  Consider a view update u that deletes all rows in v(s).  The translation of u deletes all the local rows but none of the foreign rows.  All of the translated view updates won't changed the foreign table in s, which means that all the states in [s] will have the same foreign table rows.  Hence the foreign table is in one-to-one correspondence with the induced equivalence class, and so is the complement.

This complement is also a perfect decomposition.

\subsection{Other Strategies for One-to-Many Joins}
An obvious update strategy for one-to-many joins combines the prior two update strategies.  It doesn't allow allow foreign table rows to be deleted, but does allow them to be added.  A view addition could translate to a foreign table addition.

This update strategy does not qualify as a translator.  Recall that a translator T has to preserve composition:
    $\forall u,v\in U$, T(uv)=T(u)T(v)
    
Consider an update u that adds a single view row r, and that translates into an update that adds a row to both the local and foreign tables.  Consider $u^{-1}$, its inverse, that deletes view row r.  The update strategy translates it into a deletion in the local table, but no action in the foreign table.  In this case T(id)=T($u^{-1}$u)<>T($u^{-1}$)T(u).

Though this update strategy seems perfectly valid, it isn't a Bancilhon-Spyratos translator; the theory is silent about it, and I can't compute its complement.

\subsection{Recognizing}

I've shown two one-to-many joins that have different intentions, different update strategies and different complements.  These are the sorts of semantic information that Bancilhon and Spyratos thought would help users select the right complements.

DBMSs can already distinguish between hierarchical and foreign key joins.  In a hierarchical join, the parent and child have the same key, and the join uses that key.  In a foreign key join, the join terms match a non-key column in the local table to a unique key in the foreign table.  In addition, the non-key column in the local table will often have been declared a foreign key for referential integrity checking.  This suggests that the elusive "semantic knowledge" may already be encoded in databases, and that DBMSs could use this encoding to automatically provide view updates.

\section{Summary}
Bancilhon and Spyratos’s work from 1982 is a good foundation for solving the view update problem.  The approach is quite simple for some relational operators such as Selection, Union and some Joins; a useful feature is that the solvable joins can be determined by examining the Join keys. Even Projection has a "not too bad" solution using null values.  For these reasons, I suggest that the view update problem be resuscitated, and that the Bancilhon-Spyratos approach be used the basis of new work.

\section*{Acknowledgement}
I would like to thank Jennie Rogers of Northwestern University for generous and kind advice about how the academic world works.


\begin{thebibliography}{}
\bibitem{1}
% Format for Journal Reference
Umeshwar Dayal and Phillip A. Bernstein: On the updatability of relational views, Proceedings of 4th VLDB Conference West Berlin, September, pp. 368-377 (1987)
\bibitem{2}
Arthur M. Keller: Updates to a relational database through views involving joins,IBM Research Report RJ3282, San Jose, CA, November, 1981.
\bibitem{3}
Anthony Klug: Calculating constraints on relational expressions,ACM Transactions on Databases Systems,September (1980)
\bibitem{4}
Antonio Luz Furtado, Kenneth C. Sevcik , Carlos da Silva Dos Santos:  Permitting updates through views of databases. Information Systems, 4(4), Permagon Press, Great Britain, 1979, pp. 269-283.
\bibitem{5}
François Bancilhon, Nicholas Spyratos: Update Semantics of relational views. ACM Transactions on Databases Systems, 6(4), December 1981, pp. 557-575.
\bibitem{6}
EPF Chan, AO Mendelzon: Independent and separable database schemes. Proceedings of 2nd ACM Symposium on Principles of Database Systems, 1983, p. 288-296.
\bibitem{7}
Stavros S. Cosmadakis, Christos Papadimitriou: Updates of relational view. Proceedings of 2nd ACM Symposium on Principles of Database Systems, 1983, p. 317-331.
\bibitem{8}
Nicholas Spyratos: An operational approach to data systems. Proceedings of 1st ACM Symposium on Principles of Database Systems, Los Angeles, CA.  1982, p. 212-220.
\bibitem{9}
Tomasz Imielinski, Witold Lispski: Inverting relational expressions – a uniform and natural technique for various database problems. Proceedings of 2nd ACM Symposium on Principles of Database Systems, 1983, p. 305-311.
\bibitem{10}
Ronald Fagin, Jeffrey D. Ullman, Moshe Y. Vardi: On the semantics of updates in databases; Preliminary report. Proceedings of 2nd ACM Symposium on Principles of Database Systems, 1983, p. 352-363.
\bibitem{11}
Stephen J. Hegner: Algebraic aspects of relational database decomposition. Proceedings of 2nd ACM Symposium on Principles of Database Systems, 1983, p. 400-413.
%DB\bibitem{12}
%DBTerry Brennan:   Examining the Foundations of Database Updates. Master’s Thesis. %DBNorthwestern University, December 1, 1984. 
\bibitem{13}
Enrico Franconi, Paulo Guagliardo: The View Update Problem Revisited. https://arxiv.org/pdf/1211.3016.pdf. (2012)  Accessed June 6, 2020.
\bibitem{14}
C.J. Date: View Updating and Relational Theory. O’Reilly Media. Sebastopol, CA. Jan 3, 2013.
\bibitem{15}
Aaron Bohannon, Benjamin C. Pierce, Jeffrey A Vaughan: Relational Lenses: A Language for Updateable Views, Proceedings of the Twenty-Fifth ACM SIGACT-SIGMOD-SIGART Symposium on Principles of Database Systems 2006, Chicago IL.  P. 338-347.
\bibitem{16}
Arthur M. Keller: Comments on Bancilhon and Spyratos’ “Update Semantics and Relational Views”, ACM Transactions on Database Systems, Vol. 12, No. 3, September 1987, Pages 521-523.
\bibitem{17}
Arthur M. Keller: Algorithms for translating view updates to database updates for views involving selections, projections, and joins. In ACM SIGACT–SIGMOD Symposium on Principles of Database Systems, Portland, Oregon, 1985.
\bibitem{18}
Arthur M. Keller: Choosing a view update translator by dialog at view definition time. VLDB, 1986.  p. 467-474.
\bibitem{19}
Rom Langerak: View updates in relational databases with an independent scheme.  ACM Transactions on Database Systems, Mar 1990
\bibitem{20}
C. Robert Carlson, Adarsh Arora , The Updateability of Relational Views based on Functional Dependencies.  In COMPSAC 79 - Proceedings. Computer Software and The IEEE Computer Society's Third International Applications Conference, Chicago, IL, USA, 1979.  p. 415-420.
\bibitem{21}
Fadila Bentayeb, Dominique Laurent, View Updates Translations in Relational Databases.  In: Quirchmayr G., Schweighofer E., Bench-Capon T.J. (eds) Database and Expert Systems Applications. DEXA 1998. Lecture Notes in Computer Science, vol 1460. Springer, Berlin, Heidelberg.
\bibitem{22}
Jens Lechtenbörger, Gottfried Vossen. On the Computation of Relational View Complements, ACM Transactions on Database Systems, 2003.
\bibitem{23}
Jens Lechtenbörger. The impact of the constant complement approach towards view updating.  ACM SIGACT–SIGMOD–SIGART Symposium on Principles of Database Systems, San Diego, California, page 49–55. ACM, June 9–12 2003. 
\bibitem{24}
Dominique Laurent, Jens Lechtenbörger, Nicholas Spyratos, Gottfried Vossen. Monotonic complements for independent data warehouses. VLDB Journal 10 (4).
\bibitem{25}
Stephen J. Hegner: Decomposition of Relational Schemata into Components Defined by Both Projection and Restriction.  Proceedings of the seventh ACM SIGACT-SIGMOD-SIGART symposium on Principles of database systems, March 1988, p. 174–183.
\bibitem{26}
Stephen J. Hegner: An Order-Based Theory of Updates for Closed Database Views.  Annals of Mathematics and Artificial Intelligence 40, 63–125 (2004).
\bibitem{27}
E.F. Codd:  Relational Completeness for Database Sublanguages.  Technical Report, IBM San Jose Laboratory, RJ987, March 1972.
\bibitem{28}
Michael Johnson, Robert Rosebrugh: View Updatability Based on the Models of a Formal Specification. In: Oliveira J.N., Zave P. (eds) FME 2001: Formal Methods for Increasing Software Productivity. FME 2001. Lecture Notes in Computer Science, vol 2021. Springer, Berlin, Heidelberg.
\bibitem{29}
Michael Johnson, Robert Rosebrugh:  Fibrations and universal view updatability.  Theoretical Computer Science. Volume 388, Issues 1–3, 5 December 2007, p. 109-129.
\bibitem{30}
Michael Johnson, Robert Rosebrugh:  Constant Complements, Reversibility and Universal View Updates.  In: Meseguer J., Roşu G. (eds) Algebraic Methodology and Software Technology. AMAST 2008. Lecture Notes in Computer Science, vol 5140. Springer, Berlin, Heidelberg.
\bibitem{31}
Michael Johnson, Robert Rosebrugh, R.J. Wood: Algebras and Update Strategies. Journal of Universal Computer Science.  Journal of Universal Computer Science Volume 16, Issue 5, January 2010, p. 729-748. 
\bibitem{32}
David I Spivak: Simplical Databases.  https://arxiv.org/abs/0904.2012.
%\bibitem{1}
%Authors, Article title, Journal, Volume, page numbers (year)
% Format for books
%\bibitem{RefB}
%Author, Book title, page numbers. Publisher, place (year)
% etc

\end{thebibliography}
\end{document}